\documentclass[%
 aip,
 amsmath,amssymb,
 10pt, twocolumn,
floatfix
]{revtex4-1}

\usepackage{dcolumn}
\usepackage{bm}

\pdfoutput=1
\usepackage{graphicx}

\usepackage{amssymb,amsmath,amsthm}
\usepackage{bm}

\newcommand{\pa}{\partial}
\newtheorem*{lem}{Lemma}

\begin{document}

\title{Spectral Analysis of the Koopman Operator for Partial Differential Equations}

\author{Hiroya Nakao}
\email{nakao@sc.e.titech.ac.jp}
\affiliation{Department of Systems and Control Engineering, Tokyo Institute of Technology, Tokyo 152-8552, Japan}

\author{Igor Mezi{\'c}}
\email{mezic@ucsb.edu}
\affiliation{Department of Mechanical Engineering and Mathematics, University of California, Santa Barbara, CA 93106, USA}

\date{November 13, 2020}

\begin{abstract}

We provide an overview of the Koopman operator analysis for a class of partial differential equations describing relaxation of the field variable to a stable stationary state. We introduce Koopman eigenfunctionals of the system and use the notion of conjugacy to develop spectral expansion of the Koopman operator. For linear systems such as the diffusion equation, the Koopman eigenfunctionals can be expressed as linear functionals of the field variable. The notion of inertial manifolds is shown to correspond to joint zero level sets of Koopman eigenfunctionals, and the notion of isostables is defined as the level sets of the slowest decaying Koopman eigenfunctional. Linear diffusion equation, nonlinear Burgers equation, and nonlinear phase-diffusion equation are analyzed as examples. 

\end{abstract}

\maketitle

\begin{quotation}
The Koopman operator approach to nonlinear dynamical systems has received considerable attention recently, which focuses on the evolution of the observables rather than on the system state itself and provides a rigorous method for globally linearizing the system dynamics. However, many studies of the Koopman operator analysis have focused only on finite-dimensional dynamical systems described by ordinary differential equations or maps.  We here show that the Koopman operator approach can be formally generalized to infinite-dimensional dynamical systems described by partial differential equations, providing new perspectives on the analysis and control of their nonlinear spatiotemporal dynamics. 
\end{quotation}

\section{Introduction}

Recent developments in the operator-theoretic approach to dynamical systems have provided new perspectives on the analysis of their properties~\cite{MezicBook,mauroy2020introduction, budivsic2012applied,mezic2013analysis,lan2013linearization,mezic2017koopman,korda2017data,surana2016linear,gaspard2005chaos}.
By using eigenfunctions of the Koopman operator originally introduced in the 1930's~\cite{koopman1931hamiltonian,vonneumann1932}, which describes evolution of observables rather than the system state itself, the system dynamics can often be decomposed into linearly independent Koopman modes even if the system is nonlinear.
In particular, if the dynamics is ergodic but non-chaotic, the spectrum of the Koopman operator in properly defined spaces does not contain (absolutely or singularly) continuous parts~\cite{mezic2017koopman,korda2017data} and the observable of the system can be represented as a linear combination of eigenfunctions associated with discrete eigenvalues of the Koopman operator.
In this sense, the Koopman-operator approach gives a general method for rigorously linearizing nonlinear dynamical systems~\cite{lan2013linearization,mezic2017koopman}, which can be used to develop new methods for the analysis and control of their behavior~\cite{mauroy2020introduction,surana2016linear}.

When the system exhibits stable limit-cycle oscillations, the Koopman operator of the system, restricted to the space of $L^2$ functions on a circle, has a ``basic'' pair of pure imaginary eigenvalues (as well as integer multiples of these) and the level set of the associated eigenfunction gives the {\em isochron} (equal-phase set) of the limit cycle, a key concept in the classical phase reduction theory for limit-cycle oscillators~\cite{winfree2001geometry,kuramoto2012chemical,hoppensteadt2012weakly,ermentrout2010mathematical,nakao2016phase} that gives foliation of the system dynamics around the limit-cycle solution.
In a similar spirit, it has recently been proposed that the Koopman eigenfunction associated with the largest eigenvalue can be used to define the {\em isostable} for a dynamical system converging to a stable fixed point, which is the set of system states that share equal-timing approach to the stable fixed point~\cite{mauroy2013isostables, mauroy2014converging, wilson2015extending, wilson2016isostable}.

In formulating the Koopman operator analysis, the dynamical systems studied so far have mainly been restricted to finite-dimensional systems described by ordinary differential equations (ODE's) or maps,
though the framework was originally set in the general context of nonlinear evolution equations in the Hilbert space including partial differential equations (PDE's)~(see Sec. 3.1 of Ref.~\cite{mezic2005spectral} where Koopman modes are defined). However, as recently discussed in several papers, the Koopman operator analysis and the notion of isostables can also be generalized to infinite-dimensional dynamical systems described by PDE's~\cite{wilson2016isostable,page2018koopman,kutz2018koopman}. Also, though not in the context of Koopman operator analysis, extension of the notion of the isochron to PDE's has been discussed in the phase reduction analysis of rhythmic reaction-diffusion systems~\cite{nakao2014phase}.

For a system described by a PDE, the system state is a field variable and thus the state space can be infinite-dimensional. Accordingly, the notion of Koopman eigenfunction of the system state for finite-dimensional systems should be generalized to {\em eigenfunctional} of the field variable.
The aim of this paper is to provide an overview of the Koopman operator analysis for a general class of PDE's describing relaxation of the system state to a stable stationary state (fixed point) in the simple case.
The key to our analysis here is the definition of Koopman eigenfunctionals.
It is shown that, by introducing Koopman eigenfunctionals of the system states, the Koopman operator analysis can naturally be generalized to such PDE's
on the basis of nonlinear functional analysis~\cite{Teschl}.

We use the idea of conjugacy, which has been used in Koopman mode decomposition earlier in the context of ordinary differential equations~\cite{lan2013linearization,mezic2017koopman,bollt2018matching} and data fusion \cite{williams2015data}, to develop spectral expansion of the Koopman operator. 
We also discuss the relationship of spectral expansions of Koopman operators for PDE's to the concept of inertial manifolds~\cite{foias1988inertial,foias1988computation,robinson2001infinite,jolly2001accurate} and argue that under certain conditions the inertial manifolds can be obtained as zero level sets of Koopman eigenfunctionals, enabling their computation.
These ideas are exemplified using conjugacy between linear diffusion equation and nonlinear Burgers and phase-diffusion equations.

Although the dynamic mode decomposition (DMD)~\cite{page2018koopman,mauroy2020introduction,schmid2010dynamic,kutz2018koopman,rowley2009spectral}, which is closely related to the Koopman-operator analysis, has widely been used for analyzing spatiotemporal dynamics of PDE's in recent years, explicit formulations of the Koopman-operator theory for PDE's have yet to be developed. Our aim in this paper is to provide an overview of the Koopman-operator approach to PDE's in the simple case, which would serve as a starting point for more rigorous mathematical analysis as well as to various applications of the Koopman-operator analysis for PDE's that arise in various areas of science and engineering.

This paper is organized as follows. In Sec. II, the Koopman formalism for PDE's whose solution converges to a uniform stationary state in introduced. In Sec. III, Koopman eigenfunctionals of linear PDE's are derived. In Sec. IV, three examples of solvable linear and nonlinear PDE's are analyzed in the context of conjugacy and compared with numerical simulations. Section V gives a summary and Appendix provides details of the calculations.

\section{Koopman operator formalism}

\subsection{System dynamics}

For simplicity, we consider a spatially one-dimensional, autonomous scalar PDE,
\begin{align}
\frac{\pa}{\pa t} u(x, t) = {\cal F}\{ u(x, t) \},
\label{PDE}
\end{align}
where $u \in C$ is the system state, i.e., the field variable representing a spatial pattern of the system, $t \in {\mathbb R}$ is the time, and $x \in [0, L]$ is the spatial coordinate.
The space $C$ is an appropriate function space on $[0, L]$, such as $L^2([0,L])$ of square-integrable functions.
The right-hand side ${\cal F}$ describes the dynamics of the system, which includes functions of $u(x, t)$ and its partial derivatives.
We assume appropriate boundary conditions at $x=0$ and $x=L$, such as Dirichlet, Neumann, or cyclic.
We also assume that Eq.~(\ref{PDE}) has an exponentially stable and isolated
stationary uniform solution $u_0(x) = 0$ ($0\leq x \leq L$) satisfying ${\cal F}\{ u_0(x) \} = 0$.
If $u_0(x)$ is non-uniform, we redefine $u(x, t) - u_0(x)$ as new $u(x, t)$.
We denote the basin of attraction of this stationary solution as ${\cal B} \subset C$.

In this paper, we simply assume that Eq.~(\ref{PDE}) with given boundary conditions is well-posed and has a unique solution for a given initial condition. We focus on the basin of attraction of an isolated stationary solution. If Eq.~(\ref{PDE}) has multiple stationary solutions, the analysis should be performed separately for each basin of attraction. The validity of these assumptions have to be individually analyzed for given PDE's.

In Sec.~III, we consider the diffusion equation,
\begin{align}
{\cal F}\{ u(x, t) \} = \frac{\pa^2 u(x, t)}{\pa x^2},
\end{align}
the Burgers equation,
\begin{align}
{\cal F}\{ u(x, t) \} = - u(x, t) \frac{\pa u(x, t)}{\pa x} + \frac{\pa^2 u(x, t)}{\pa x^2},
\end{align}
and the nonlinear phase-diffusion equation,
\begin{align}
{\cal F}\{ u(x, t) \} = \frac{\pa^2 u(x, t)}{\pa x^2} + \left( \frac{\pa u(x, t)}{\pa x} \right)^2,
\end{align}
as examples, which are mutually conjugate in the sense explained later.

\subsection{Koopman operator and infinitesimal generator}

In the Koopman operator analysis, evolution of the observables of the system rather than the system state itself is focused on~\cite{MezicBook,budivsic2012applied,mezic2013analysis}. Even if the system dynamics is nonlinear, the evolution of the observables is described by a linear Koopman operator.
For ODE's, the observable is generally a nonlinear function that maps the system states represented by finite-dimensional vectors to complex values. For PDE's, the observable is generally a nonlinear functional that maps the system states represented by functions, i.e., field variables, to complex values. For rigorous mathematical foundations of the nonlinear functional analysis, see Ref.~\cite{Teschl}.

For systems described by PDE's, the observable of the system is given by an observation functional
\begin{align}
g[u] : {\cal B} \to {\mathbb C}
\end{align}
of the field variable $u \in {\cal B}$. Evolution of the observable is described by a Koopman operator $U^t$, satisfying
\begin{align}
U^t g[u] = g[ S^t u ],
\end{align}
where $U^t g$ is the observable at time $t$ starting from $g$ at time $0$ and $S^t u$ is the field variable at  $t$ starting from $u$ at time $0$, respectively. Here, $S^t : {\cal B} \to {\cal B}$ is the flow of the PDE~(\ref{PDE}), which satisfies
\begin{align}
S^t u(x, s) = u(x, s+t)
\end{align}
for arbitrary $s$ and $t$. It is clear that $U^0$ is an identity, $U^0 g[u] = g[S^0 u] = g[u]$, $U^t$ is a linear operator, 
\begin{align}
U^t ( c_1 g_1[u] + c_2 g_2[u] )
&= c_1 g_1[S^t u] + c_2 g_2[S^t u]
\cr
&= c_1 U^t g_1[u] + c_2 U^t g_2[u]
\end{align}
for any $c_1$ and $c_2$, and $U^t$ is commutable, as $g[S^{s} S^tu]=U^t U^s g[u]$ and $g[S^{s} S^tu]=g[S^{t} S^su]$
imply
\begin{align}
U^t U^s g[u]=U^s U^t g[u].
\end{align}
The inverse operator $U^{-t} = (U^{t})^{-1}$ exists as long as the flow of the PDE~(\ref{PDE}) is a group over $t$.

Infinitesimal evolution of a smooth observable $U^t g$ at $t$ is represented as
\begin{align}
\frac{d}{dt} \{ U^t g[u] \} = A \{ U^t g[u] \},
\end{align}
where the linear operator $A$ is an infinitesimal generator of the Koopman operator $U^{t}$ given by
\begin{align}
A g[u] 
= \lim_{\tau \to 0} \frac{ U^{\tau} g[u] - g[u] } {\tau}
= \int_0^L {\cal F}\{ u(x) \} \frac{\delta g[u]}{\delta u(x)} dx.
\label{generator-def}
\end{align}
Here, $\delta g[u] / \delta u(x)$ is a functional derivative of $g[u]$ with respect to $u(x)$ (see Appendix for details). Using the generator $A$, the action of the Koopman operator $U^t$ on the observable $g[u]$ can be expressed as
\begin{align}
U^t g[u] = \exp( A t ) g[u] = \sum_{k=0}^{\infty} \frac{1}{k!} t^k A^k g[u].
\end{align}

\subsection{Koopman eigenfunctionals}

Since $U^t$ is a linear operator, we can consider its eigenvalue $\lambda \in {\mathbb C}$ and eigenfunctional $\phi_\lambda[u] : {\mathcal B} \to {\mathbb C}$ satisfying
\begin{align}
U^t \phi_{\lambda}[u] 
= \phi_{\lambda}[ S^t u ]
= e^{\lambda t} \phi_{\lambda}[ u ],
\end{align}
where $\lambda$ rather than $e^{\lambda t}$ is called the eigenvalue
(this term is reserved for the generator $A$ of $U^t$ introduced below).
If $\phi_{\lambda_1}[u]$ is an eigenfunctional associated with $\lambda_1$
and $\phi_{\lambda_2}[u]$ is an eigenfunctional associated with $\lambda_2$,
the product $\phi_{\lambda_1}[u] \phi_{\lambda_2}[u]$ is also an eigenfunctional associated with the eigenvalue $\lambda_1 + \lambda_2$, because
\begin{align}
U^t ( \phi_{\lambda_1}[u] \phi_{\lambda_2}[u] )
&= \phi_{\lambda_1}[S^t u] \phi_{\lambda_2}[S^t u]
\cr
&= e^{\lambda_1 t} \phi_{\lambda_1}[u]
e^{\lambda_2 t} \phi_{\lambda_2}[u]
\cr
&= e^{(\lambda_1 + \lambda_2)t} \phi_{\lambda_1}[u] \phi_{\lambda_2}[u].
\end{align}
Similarly, $\phi_\lambda^n[u] = ( \phi_{\lambda}[u] )^n$ is an eigenfunctional of $U^t$ with eigenvalue $n \lambda$,
and more generally
\begin{align}
\phi_{\lambda_1}^{k_1}[u] \phi_{\lambda_2}^{k_2}[u] \cdots \phi_{\lambda_n}^{k_n}[u]
\label{products}
\end{align}
with non-negative integers $k_1, \cdots, k_n$ satisfying $k_1 + \cdots + k_n > 0$ is an eigenfunctional of $U^t$ associated with an eigenvalue $\sum_{j=1}^n k_j \lambda_j$ for $n=2, 3, \cdots$.

A smooth eigenfunctional $\phi_{\lambda}[u]$ of the Koopman operator $U^t$ is also an eigenfunctional of the generator $A$, i.e.,
\begin{align}
A \phi_{\lambda}[u] = \lambda \phi_{\lambda}[u],
\label{eigen_generator}
\end{align}
because
\begin{align}
A \phi_{\lambda}[u]
&= \lim_{\tau \to 0} \frac{ U^\tau \phi_{\lambda}[u] - \phi_{\lambda}[u] }{\tau }
\cr
&= \lim_{\tau \to 0} \frac{ e^{\lambda \tau} - 1 }{\tau } \phi_{\lambda}[u] = \lambda \phi_{\lambda}[u].
\label{generatoreigen}
\end{align}

For the stationary state $u_0(x)$ satisfying ${\cal F}\{ u_0(x) \} = 0$, 
$S^t u_0(x) = u_0(x)$ holds for any $t \geq 0$ and thus
\begin{align}
U^t \phi_{\lambda}[u_0] &= \phi_\lambda [ S^t u_0 ] = \phi_\lambda [ u_0 ]
\end{align}
and
\begin{align}
A \phi_{\lambda}[u_0] &= 0.
\end{align}
Therefore, $\phi_{\lambda}[u_0]$ can take non-zero values only when $\lambda = 0$. As long as $\lambda \neq 0$ ($e^{\lambda t} \neq 1$), we have $\phi_\lambda[u_0] = 0$.

If the system has a conserved quantity represented by a functional $h[u]$, then $h[u]$ satisfies
\begin{align}
U^t h[u] = h[S^t u] = h[u]
\end{align}
and
\begin{align}
A h[u] = 0.
\end{align}
Thus, $h[u]$ is a Koopman eigenfunctional of $U$ and $A$ associated with a vanishing eigenvalue $\lambda = 0$ ($e^{\lambda t} = 1$).

Note that the system can generally possess two or more conserved quantities.
For example, for the 2D Euler equation with appropriate boundary conditions, the energy (2nd moment of the velocity field) and enstrophy (2nd moment of the vorticity field) are both conserved and hence they are Koopman eigenfunctionals with eigenvalue $0$.

\subsection{Linear systems}

For linear PDE's, we can derive an explicit formula for the Koopman eigenvalues and eigenfunctionals. Consider a linear PDE,
\begin{align}
\frac{\pa}{\pa t} u(x, t) = {\cal L} u(x, t),
\label{linearpde}
\end{align}
where ${\cal L}$ is a linear time-independent operator describing the evolution of the system.
We introduce appropriate boundary conditions and assume that Eq.~(\ref{linearpde}) has a stable, isolated, stationary uniform solution $u_0(x) = 0$ for $0 \leq x \leq L$.
We assume that ${\cal L}$ has a discrete spectrum with eigenvalues $\lambda = \lambda_n$ ($n=1, 2, \cdots, N$) with negative real part, where $N$ can go to infinity.
The case with zero eigenvalue is excluded because we consider an isolated stable stationary state. We denote the flow of Eq.~(\ref{linearpde}) as 
\begin{align}
S_{\cal L}^t = e^{{\cal L}t}, 
\end{align}
and the corresponding Koopman operator as
\begin{align}
U^t_{\cal L} = \exp (A_{\cal L} t),
\end{align}
where $A_{\cal L}$ is the generator of $U^t_{\cal L}$.

We denote the eigenfunction of ${\cal L}$ associated with eigenvalue $\lambda$ as $q_{\lambda}$, i.e., 
\begin{align}
{\cal L} q_{\lambda}(x) = \lambda q_{\lambda}(x),
\end{align}
and the eigenfunctions of the adjoint operator ${\cal L}^*$ of ${\cal L}$ associated with eigenvalue $\overline{\lambda}$ as $w_{\lambda}$ (the overline indicates complex conjugate),
\begin{align}
{\cal L}^{*} w_{\lambda}(x) = \overline{\lambda} w_{\lambda}(x),
\label{adjoint}
\end{align}
where the adjoint operator ${\cal L}^{*}$ of ${\cal L}$ is defined with respect to the inner product $\int_0^L a(x) \overline{b(x)} dx$ of two functions $a(x)$ and $b(x)$, 
and appropriate adjoint boundary conditions for $w_{\lambda}(x)$ should be introduced so that the bilinear concomitant vanishes~\cite{keener1988principles}.

Using $w_{\lambda}$ as a weight function, we introduce a linear functional 
\begin{align}
\phi_{\lambda}[u] = \int_0^L u(x) \overline{w_{\lambda}(x)} dx.
\label{linearform}
\end{align}
This $\phi_{\lambda}[u]$ gives the eigenfunctional of the generator $A_{\cal L}$ with eigenvalue $\lambda$, because
\begin{align}
A_{\cal L} \phi_{\lambda}[u]
&= \int \{ {\cal L} u(x) \} \frac{\delta \phi_{\lambda}[u]}{\delta u(x)} dx
\cr
&= \int \{ {\cal L} u(x) \} \overline{w_{\lambda}(x)} dx 
\cr
&= 
\int u(x) \{ \overline{ {\cal L}^{*} w_{\lambda}(x) } \} dx
\cr
&= \lambda \int u(x) \overline{w_{\lambda}(x)} dx
= \lambda \phi_{\lambda}[u]
\label{adjoint0}
\end{align}
from Eq.~(\ref{generator-def}) and Eq.~(\ref{eigen_generator}).
The above $\phi_{\lambda}[u]$ is also an eigenfunctional of the Koopman operator $U^t_{\cal L}$, because
\begin{align}
U_{\cal L}^t \phi_{\lambda}[u] 
&= \phi_{\lambda_n}[S_{\cal L}^t u] = \phi_{\lambda_n}[e^{{\cal L}t} u]
\cr
&=\int \{ e^{{\cal L}t} u(x) \} \overline{w_{\lambda}(x)}dx\nonumber \\
&=\int u(x) \overline{e^{{\cal L^*}t} w_{\lambda}(x)}dx\nonumber \\
&= e^{\lambda t} \int u(x) \overline{w_{\lambda}(x)}dx\nonumber 
= e^{\lambda t} \phi_{\lambda}[u].
\end{align}
In Sec. III, we derive Koopman eigenfunctionals for a linear diffusion equation using the above result.

\subsection{Conjugacy}

Generally, the spectrum of the Koopman operator $U^t$ may contain discrete and continuous parts.
When the solution of the PDE converges to a stationary solution $u_0(x)$ globally and exponentially in some Banach space, the eigenvalues of $U^t$ have negative real part and, under certain conditions, they include the eigenvalues of the linearized operator of ${\cal F}$ around $u_0(x)$.
To show this, we introduce the notion of conjugacy~\cite{MezicBook, budivsic2012applied,mezic2017koopman}.

Let $S_{\cal F}^t$ and $U_{\cal F}^t$ be the flow and Koopman operator for a PDE
\begin{align}
\frac{\pa}{\pa t} u(x, t) = {\cal F}\{ u(x, t) \},
\end{align}
and $S_{\cal G}^t$ and $U_{\cal G}^t$ the flow and Koopman operator for another PDE
\begin{align}
\frac{\pa}{\pa t} v(x, t) = {\cal G}\{ v(x, t) \},
\end{align}
where $u \in C$ and $v \in C$ are the field variables and ${\cal F}$ and ${\cal G}$ describe their dynamics.

These two systems are {\em conjugate} when a diffeomorphism, i.e., smooth mapping
\begin{align}
v = \Phi( u )
\end{align}
such that
\begin{align}
\Phi( S_{\cal F}^t u ) = S_{\cal G}^t \Phi( u )
\end{align}
exists. Suppose that $\phi_\lambda[v]$ is an eigenfunctional of $U_{\cal G}^t$ with eigenvalue $\lambda$. Then,
\begin{align}
U^t_{\cal G} \phi_\lambda[v] &= \phi_\lambda[ S_{\cal G}^t v ] = \phi_\lambda[ S_{\cal G}^t \Phi( u ) ]
\cr
&
= \phi_\lambda[ \Phi( S_{\cal F}^t u ) ] = U_{\cal F}^t \phi_\lambda[ \Phi( u ) ]
\cr
&= e^{\lambda t} \phi_\lambda[v] = e^{\lambda t} \phi_\lambda[ \Phi( u ) ],
\label{conjugate-eigen}
\end{align}
so $\phi_\lambda[\Phi(u) ]$ is an eigenfunctional of $U_{\cal F}^t$ of the conjugate system with the same eigenvalue $\lambda$.

Now, assume that a fully nonlinear problem
\begin{align}
\frac{\partial}{\partial t} u(x, t) = {\cal F}\{ u(x, t) \}
\label{fullynonlinear}
\end{align}
has a stable stationary solution $u_0(x) = 0$ and split ${\cal F}$ around $u_0(x)$ as
\begin{align}
{\cal F}\{ u(x,t) \} = {\cal L} u(x, t) + {\cal N}\{ u(x, t) \},
\end{align}
where ${\cal L}$ is a linearized operator of ${\cal F}$ around $u_0(x) = 0$ and ${\cal N}$ is a nonlinear part satisfying $ {\cal N}\{ u_0(x) \} = 0 $. We consider a linearized problem,
\begin{align}
\frac{\partial}{\partial t} v(x, t) = {\cal L} v(x, t) = {\cal G}\{ v(x, t) \},
\label{linearized}
\end{align}
and assume that ${\cal L}$ has a discrete spectrum with eigenvalues $\lambda_n (n=1, 2, \cdots)$ with negative real part.

As before, let $w_{\lambda}$ be the eigenfunctions of the adjoint operator ${\cal L^*}$ of ${\cal L}$, satisfying
\begin{align}
{\cal L}^* w_{\lambda}(x) = \overline{\lambda} w_{\lambda}(x).
\end{align}
Then, the functional
\begin{align}
\phi_{\lambda}[v]=\int v(x) \overline{w_{\lambda}(x)}dx
\end{align}
is an eigenfunctional of the Koopman operator $U_{\cal L}^t$ for the linear system Eq.~(\ref{linearized}) and satisfies
\begin{align}
U_{\cal L}^t \phi_{\lambda}[v]
= \phi_{\lambda}[e^{{\cal L}t} v]
&= e^{\lambda t} \phi_{\lambda}[v].
\end{align}

Now suppose that the solutions of the original fully nonlinear problem Eq.~(\ref{fullynonlinear})
and the linearized problem Eq.~(\ref{linearized}) are conjugate, i.e.,
\begin{equation}
\Phi ( S_{\cal F}^t u ) = e^{{\cal L} t} \Phi( u ),
\label{conj}
\end{equation}
where $S_{\cal F}^t$ is the flow of Eq.~(\ref{fullynonlinear}) and $S_{\mathcal G}^t = e^{{\cal L} t}$ is the flow of Eq.~(\ref{linearized}).
Then, from Eq.~(\ref{conjugate-eigen}), if $\phi_\lambda[v]$ is an eigenfunctional of $U_{\cal L}^t$ with eigenvalue $\lambda$, then $\phi_{\lambda}[\Phi( u )]$ is an eigenfunctional of $U_{\cal F}^t$ with the same eigenvalue $\lambda$. 
Thus, the eigenvalue $\lambda$ of the linearized operator ${\cal L}$ is also an eigenvalue of the Koopman operator $U_{\cal F}^t$.
These eigenvalues $\lambda_1, \lambda_2, ...$ of the linearized operator ${\cal L}$ and of the generator $A$ will be called {\em principal eigenvalues} in the following discussion. They are sorted in decreasing order of their real part as ${0 > }\mbox{Re} \lambda_1 \geq \mbox{Re} \lambda_2 \geq \cdots$ as usual.

\subsection{Expansion of observables}

We here discuss the expansion of analytic observables by the principal Koopman eigenfunctionals. We consider a system
\begin{align}
\frac{\pa}{\pa t} u(x, t) = {\cal F}\{ u(x, t) \}
\label{systemexpanded}
\end{align}
with an exponentially stable stationary state $u_0(x) = 0$, and denote the linearized operator of ${\cal F}$ around $u_0(x)$ as ${\cal L}$. We assume that the original and linearized systems are conjugate, and ${\cal L}$ has a discrete spectrum with eigenvalues $\lambda_j$ ($j=1, 2, \cdots$) that accumulate only at infinity, $\lim_{j \to \infty} \mbox{Re\ }\lambda_j = - \infty$.
These eigenvalues are then the principal eigenvalues of the Koopman operator $U^t_{\cal F}$.
We denote the eigenfunction of ${\cal L}$ associated with $\lambda_n$ as $q_n$ and the field variable $u$ can be expressed as
\begin{align}
u(x, t) = \sum_{j=1}^{\infty} a_j(t) q_j(x),
\end{align}
where $a_1, a_2, \cdots$ are the expansion coefficients.

Similarly to the case of ODE's~\cite{MezicBook}, when the solution of the PDE $u(x, t)$ converges to a globally and exponentially stable stationary solution $u_0(x) = 0$, a sufficiently smooth observable $g[u]$ can be expanded as an infinite series over the set of Koopman eigenfunctionals $\{ \phi_{\lambda_n} [u] \}$ with principal eigenvalues $\{ \lambda_n \}$ ($n=1, 2, ..., \infty$) as 
\begin{align}
g[u] = g[0] 
+ \sum_{n=1}^{\infty}
\frac{1}{n!}
&\sum_{j_1=1}^{\infty} \sum_{j_2=1}^{\infty} \cdots \sum_{j_n=1}^{\infty} c_{j_1, j_2, ..., j_n}
\cr
&\times
\phi_{\lambda_{j_1}}[u] \phi_{\lambda_{j_2}}[u] \cdots \phi_{\lambda_{j_n}}[u],
\end{align}
where $c_{j_1, j_2, \cdots, j_n}$ are expansion coefficients determined by functional derivatives of $g[u]$. See Appendix for the derivation and explicit expression for the coefficients.

Evolution of the observable $g[u]$ can be expressed as
\begin{align}
U^t g[u] = 
g[0] + \sum_{n=1}^{\infty}
\frac{1}{n!}
\sum_{j_1=1}^{\infty} \sum_{j_2=1}^{\infty} \cdots \sum_{j_n=1}^{\infty} c_{j_1, j_2, ..., j_n}
\cr
\times
e^{(\lambda_{j_1} + \lambda_{j_2} + \cdots + \lambda_{j_n}) t} \phi_{\lambda_{j_1}}[u] \phi_{\lambda_{j_2}}[u] \cdots \phi_{\lambda_{j_n}}[u],
\end{align}
where the summation is taken over all discrete eigenvalues. Thus, by regarding the Koopman eigenfunctionals as new coordinates, dynamics of the system can be decomposed into linear independent components.

In particular, if the observable is chosen as
\begin{align}
g_x[ u ] = u(x)
\end{align}
where the subscript $x$ indicates that the observable $g_x[ u ]$ gives the value of the function $u$ at $x$, i.e., the field variable $u(x)$ itself, the evolution of the system state can be expressed as
\begin{align}
&u(x, t) = g_x[ S^t u ] = U^t g_x [u]
\cr
&=
\sum_{n=1}^{\infty}
\frac{1}{n!}
\sum_{j_1=1}^{\infty} \sum_{j_2=1}^{\infty} \cdots \sum_{j_n=1}^{\infty} d_{j_1, j_2, ..., j_n}(x)
\cr
&\times e^{(\lambda_{j_1} + \lambda_{j_2} + \cdots + \lambda_{j_n}) t} 
\phi_{\lambda_{j_1}}[u] \phi_{\lambda_{j_2}}[u] \cdots \phi_{\lambda_{j_n}}[u],
\label{expansion}
\end{align}
where $g_x[0] = u_0(x) = 0$ is used and $d_{j_1, j_2, \cdots, j_n}(x)$ are $x$-dependent expansion coefficients of the observable $g_x[u]$, called the Koopman modes~\cite{mezic2017koopman}. The lowest order terms of this expansion is given by
\begin{align}
u(x, t) = \sum_{j_1=1}^{\infty} d_{j_1}(x) e^{\lambda_{j_1} t} \phi_{\lambda_{j_1}}[u]
+ \cdots.
\label{expansion2}
\end{align}

Here, we give a lemma, which is useful for considering the inertial manifolds and model reduction of the PDE.

\begin{lem}
Let the real parts $\sigma_j$ of the eigenvalues of ${\cal L}$  satisfy $\lim_{j \to \infty} \sigma_j= -\infty$, and let there be no finite accumulation points of the spectrum in the left half of the complex plane. Let $D<0$. Then, the number of eigenvalues of the Koopman operator $U^t$ such that $\sigma_j>D$ is finite.
\end{lem}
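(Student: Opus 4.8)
The plan is to exploit the product structure of the Koopman eigenfunctionals established in Eq.~(\ref{products}): every eigenvalue of $U^t$ generated by the principal eigenfunctionals is of the form $\mu = \sum_{j} k_j \lambda_j$ with non-negative integers $k_j$ satisfying $\sum_j k_j \geq 1$. Writing $\sigma_j = \mathrm{Re}\,\lambda_j < 0$, the real part of such a combined eigenvalue is $\mathrm{Re}\,\mu = \sum_j k_j \sigma_j$, so the condition $\mathrm{Re}\,\mu > D$ (recall $D<0$) is equivalent to the single scalar inequality $\sum_j k_j |\sigma_j| < |D|$. The whole argument reduces to showing that only finitely many multi-indices $(k_1, k_2, \dots)$ satisfy this.

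First I would establish a spectral gap. Because $\lim_{j\to\infty}\sigma_j = -\infty$ and the spectrum has no finite accumulation point, only finitely many principal eigenvalues lie in any vertical strip; in particular the maximum $\sigma_* := \max_j \sigma_j = \sigma_1$ is attained (the strip is nonempty and finite) and, since the zero eigenvalue is excluded, it is strictly negative. Hence $0 < |\sigma_*| \leq |\sigma_j|$ for every $j$. This uniform lower bound on the magnitudes is precisely what forces the counting to terminate.

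Next I would bound the total order of the multi-index. Since $|\sigma_j| \geq |\sigma_*|$, the inequality $\sum_j k_j |\sigma_j| < |D|$ forces $|\sigma_*| \sum_j k_j < |D|$, hence $\sum_j k_j < |D|/|\sigma_*| =: K$, a finite bound on the total degree. Simultaneously, because every term $k_j|\sigma_j|$ is non-negative and their sum is below $|D|$, any index $j$ carrying weight $k_j \geq 1$ must satisfy $|\sigma_j| < |D|$, i.e. $\sigma_j > D$; by discreteness there are only finitely many such indices, say $m = \#\{j : \sigma_j > D\}$. Thus every admissible combined eigenvalue comes from a multi-index supported on the first $m$ principal eigenvalues with $1 \leq \sum_{j=1}^m k_j \leq K$.

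Finally I would count: the number of non-negative integer multi-indices $(k_1,\dots,k_m)$ with total at most $K$ is at most $\binom{m+K}{K}$, which is finite, so the set of Koopman eigenvalues with real part exceeding $D$ is finite, proving the lemma. The only genuinely delicate step is the spectral-gap claim: one must invoke both hypotheses ($\sigma_j \to -\infty$ and the absence of finite accumulation points) to guarantee simultaneously that $\sigma_*$ is strictly negative and bounded away from the imaginary axis, and that $\{j : \sigma_j > D\}$ is finite. Without a gap at the imaginary axis the total order $\sum_j k_j$ would be unbounded and the conclusion would fail, so this is where the substance of the lemma resides; the remaining combinatorial counting is routine.
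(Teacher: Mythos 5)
Your proposal is correct and follows essentially the same route as the paper's own proof: both exploit the lattice form $\sum_j k_j \lambda_j$ of the Koopman eigenvalues, use $\sigma_j \to -\infty$ to restrict attention to the finitely many principal eigenvalues with $\sigma_j > D$, and use the strict negativity of the $\sigma_j$ (the gap at the imaginary axis) to bound the admissible multiplicities $k_j$, concluding by a finite count. The only difference is presentational: the paper bounds each exponent separately (for each $j$ there is $n$ with $n\sigma_j < D$), while you derive a uniform bound $\sum_j k_j < |D|/|\sigma_1|$ and count multi-indices explicitly, which makes the final counting step more airtight than the paper's terse version.
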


\begin{proof}
There is the smallest $K$ such that $\sigma_k<D$ for $k \geq K$. Also, for any $\sigma_j > \sigma_K$ there is a positive integer $n$ such that $n\sigma_j<D$. Since the eigenvalues of the Koopman eigenfunctional take the form $\sum_{j=1}^n k_j \lambda_j$ where $k_j \geq 0$ ($j=1 ..., n$), the number of eigenvalues whose real parts are bigger than $D$ is finite
\footnote{This result can easily be generalized for periodic attractors, where instead of
claiming the number of eigenvalues is finite in a subset of left half plane, we can claim
the number is finite in a rectangular subset of the complex plane that
includes the imaginary axis and $0$ on its right boundary.}.
\end{proof}

\subsection{Inertial manifold and isostables}

Using Eq.~(\ref{expansion}), we can introduce the notion of {\em inertial manifold}~\cite{foias1988inertial,foias1988computation,robinson2001infinite,jolly2001accurate} and {\em isostable}~\cite{mauroy2013isostables, mauroy2014converging, wilson2015extending, wilson2016isostable} of the system.
The existence of the inertial manifold has been discussed for several class of PDE's such as reaction-diffusion systems~\cite{foias1988inertial,foias1988computation,robinson2001infinite}.
Also, the notion of isostable has recently been applied to the control of ODE's and PDE's~\cite{mauroy2014converging,wilson2016isostable,wilson2020adaptive}, and also for phase-amplitude reduction of limit-cycling systems~\cite{wilson2015extending,mauroy2016global,shirasaka2017phase,shirasaka2020phase}. 

The inertial manifold of the system is a finite-dimensional smooth manifold ${\cal I}$, which satisfies $S^t {\cal I} {\subseteq} {\cal I}$ for $t>0$ and exponentially attracts all solutions of the system (hence ${\cal I}$ includes the globally stable stationary state).
When the Koopman eigenfunctionals of the system are known, the inertial manifold can be defined as joint zero level sets of the Koopman eigenfunctionals other than the one associated with the largest eigenvalue~\cite{MezicBook,mezic2017koopman}.
The isostable of a stable stationary state is the set of system states that share the same asymptotic convergence to the stationary state, i.e., that converge to the stationary state with the same timing as $t \to \infty$.
The isostable can be defined as a level set of the Koopman eigenfunctional associated with the eigenvalue with the largest real part~\cite{MezicBook,mauroy2014converging,wilson2016isostable}.

For example, if $\lambda_1$ is real and $0 > \lambda_1 > {\mbox Re} \lambda_2 \geq \cdots$, the lowest-dimensional inertial manifold can be defined as
\begin{align}
{\cal I} = \{ u(x) \in {\cal B} \ | \ \phi_{\lambda_n}[u] = 0, \ \forall n \geq 2 \},
\end{align}
and the evolution of the system state $u(x, t) = S^t u(x,0)$ on ${\cal I}$ starting from $u \in {\cal I}$ at time $0$ can be represented as
\begin{align}
u(x, t) = d_{1}(x) e^{\lambda_1 t} \phi_{\lambda_1}[u] + d_{1,1}(x) e^{2 \lambda_1 t} \{ \phi_{\lambda_1}[u] \}^2 + \cdots.
\label{inertiareal}
\end{align}
From Eq.~(\ref{expansion}), all system states in ${\cal B}$ are exponentially attracted to the one-dimensional inertial manifold ${\cal I}$ and converges to the stable stationary state along ${\cal I}$.

At sufficiently large $t$, the system state $u(x, t)$ starting from arbitrary initial state in ${\cal B}$ approximately satisfies Eq.~(\ref{inertiareal}).
Therefore, if two system states $u_1(x)$ and $u_2(x)$ satisfy $| \phi_{\lambda_1}[u_1] | = | \phi_{\lambda_1}[u_2] |$ initially, they converge to the stationary state $u_0(x) = 0$ with the same timing, satisfying $\| u_1 \| \simeq \| u_2 \|$ asymptotically as $t \to \infty$.
Here, the norm $\| a \|$ of a function $a(x)$ is defined as $\| a \| = ( \int_0^L |a(x)|^2 dx )^{1/2}$.
Thus, the level set of $\phi_{\lambda_1}[u]$ gives the isostable of the stationary state.\\

Note that the 1st Koopman mode $d_1(x)$ in Eq.~(\ref{inertiareal}) is simply proportional to the eigenfunction $q_1(x)$ of $\mathcal{L}$. At sufficiently large $t$, the system state $u(x, t)$ is close to $u_0(x) = 0$ and approximately obeys a linearized dynamics $\pa u(x, t) /\pa t = {\cal L} u(x, t)$. Thus, the solution at sufficiently large $t$ is dominated by the slowest decaying mode $q_1(x)$ with the largest $\lambda_1$ as $u(x, t) \approx a_1 q_1(x) e^{\lambda_1 t}$ where $a_1$ is a constant, and the 1st Koopman mode $d_1(x)$ can be obtained from the field variable $u(x, t)$ by
\begin{align}
d_1(x) \phi_{1}[u] = \lim_{t \to \infty} \{ u(x, t) e^{-\lambda_1 t} \} = a_1 q_1(x),
\end{align}
i.e. $d_1(x) \propto q_1(x)$.

Similarly, if $\lambda_1$ and $\lambda_2$ are complex conjugate mutually, then we have $0 > {\mbox Re} \lambda_1 = {\mbox Re} \lambda_2 > {\mbox Re} \lambda_3 \geq \cdots$.
We thus have a two-dimensional inertial manifold in this case, which can be defined as
\begin{align}
{\cal I} = \{ u(x) \in {\cal B} \ | \ \phi_{\lambda_n}[u] = 0, \ \forall n \geq 3 \}
\end{align}
and the evolution of the system state on the two-dimensional inertial manifold ${\cal I}$ can be expressed as
\begin{align}
u(x, t) &=
 d_{1}(x) e^{\lambda_1 t} \phi_{\lambda_1}[u] + d_{2}(x) e^{\lambda_2 t} \phi_{\lambda_2}[u]
 \cr
 +&
  d_{1,1}(x) e^{2 \lambda_1 t} \{ \phi_{\lambda_1}[u] \}^2 
	+ 
 d_{2,2}(x) e^{2 \lambda_2 t} \{ \phi_{\lambda_2}[u] \}^2
	\cr
  +&
  	d_{1,2}(x) e^{(\lambda_1 + \lambda_2) t} \phi_{\lambda_1}[u] \phi_{\lambda_2}[u]
	+ \cdots.
\end{align}
The level sets of $|\phi_{\lambda_1}[u]|$ (or equivalently $|\phi_{\lambda_2}[u]|$) give the isostables.

The above construction of the 1st Koopman mode is essentially the same as those used in the analysis of isostables for ODE's~\cite{mauroy2013isostables} and PDE's~\cite{wilson2016isostable}.\\

\section{Examples}

We here provide Koopman eigenfunctionals for three simple examples of linear and nonlinear PDE's
that are mutually conjugate, and illustrate by direct numerical simulations that the
values of the Koopman eigenfunctionals of the evolving system state
actually exhibit exponential relaxation.

\begin{figure}[htbp]
\includegraphics[width=0.85\hsize]{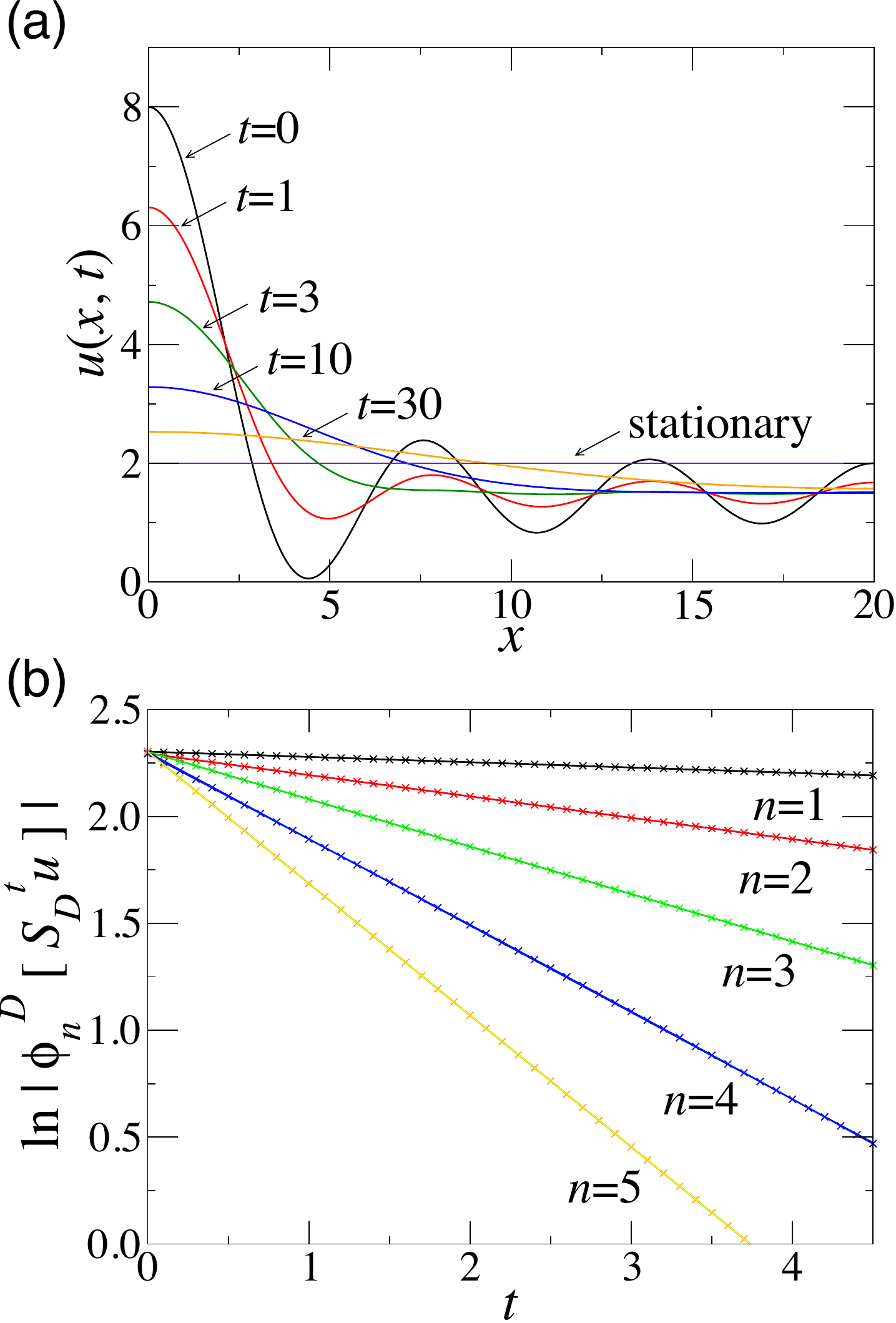}
\caption{Linear diffusion equation with Neumann boundary conditions. (a) Relaxation of the field variable $u(x, t)$ to a uniform stationary state. (b) Exponential decay of the Koopman eigenfunctionals $\phi^D_{n}[u(x,t)]$ for $n=1, \cdots, 5$. The initial state is $u(x, 0) =  2 + \cos( \pi  x / L ) + \cos( 2 \pi x / L ) + \cos( 3 \pi x / L ) + \cos( 4 \pi x / L ) + \cos( 5 \pi x / L ) + \cos( 6 \pi x/ L )$. }
\label{fig1}
\end{figure}

\subsection{Linear diffusion equation}

\subsubsection{Neumann boundary conditions}

We first consider a linear diffusion equation on $[0, L]$,
\begin{align}
\frac{\pa}{\pa t} u(x, t) = \frac{\pa^2}{\pa x^2} u(x, t),
\label{lineardiffusion}
\end{align}
where $u(x, t) \in L^2([0,L])$ is the field variable, and assume homogeneous Neumann boundary conditions,
\begin{align}
\frac{\partial u}{\partial x}(0, t) = \frac{\partial u}{\partial x}(L, t) = 0.
\label{neumann}
\end{align}
Then Eq.~(\ref{lineardiffusion}) has a stable, spatially-uniform stationary solution
\begin{align}
u_0(x) = \frac{1}{L} \int_0^L u(x{'}, 0) dx' =const. \quad (0 \leq x \leq L).
\end{align}
In the following discussion, we redefine $u(x, t) - u_0(x)$ as the new field variable $u(x, t)$. Then, the new $u(x, t)$ obeys the same diffusion equation~(\ref{lineardiffusion}) and the boundary conditions~(\ref{neumann}) and $u(x, t) \to 0$ as $t \to \infty$.
This eliminates the zero eigenvalue from the problem, which arises when Neumann boundary conditions are assumed and the integral of $u(x, 0)$ does not vanish.
Note that, if we do not subtract $u_0(x)$ from $u(x, t)$ in this problem, we have a continuous family of the final stationary solutions depending on the value of $\int_0^L u(x, 0) dx$, which is conserved through the evolution.
This is not the case if we add generic perturbations to the system.

We denote the flow, Koopman operator, and generator of the Koopman operator of this system as $S_{D}^t$,  $U_{D}^t$, and  $A_{D}$, respectively.
The linear operator in Eq.~(\ref{lineardiffusion}) is
\begin{align}
{\cal L} 
=
\frac{\pa^2}{\pa x^2},
\end{align}
and we obtain by partial integration
\begin{align}
\int_0^L \frac{\pa^2 u(x, t)}{\pa x^2} \overline{ v(x, t) } dx
=
J + \int_0^L u(x, t) \overline{ \frac{\pa^2 v(x, t)}{\pa x^2} } dx,
\end{align}
where $v(x, t) \in C^2$ and
\begin{align}
J = \left[ \frac{\pa u(x, t)}{\pa x} \overline{ v(x, t) } - u(x, t) \overline{ \frac{\pa v(x, t)}{\pa x} } \right]_0^L
\end{align}
is a bilinear concomitant. This $J$ vanishes by assuming homogeneous Neumann boundary conditions
\begin{align}
\frac{\partial v}{\partial x}(0, t) = \frac{\partial v}{\partial x}(L, t) = 0
\end{align}
also for $v(x, t)$. Thus, the problem is self-adjoint and if the weight function
$w(x) = w_{\lambda}(x)$ satisfies the adjoint eigenvalue equation
\begin{align}
{\cal L}^{*} w_{\lambda}(x) = \frac{\pa^2}{\pa x^2} w_{\lambda}(x) = \overline{\lambda} w_{\lambda}(x)
\label{adjointeq}
\end{align}
with the adjoint boundary conditions
\begin{align}
\frac{\partial w_\lambda}{\partial x}(0) = \frac{\partial w_\lambda}{\partial x}(L) = 0,
\label{adjointbc}
\end{align}
then
\begin{align}
\phi_{\lambda}[u] = \int_0^L u(x) \overline{w_{\lambda}(x)} dx
\end{align}
is an eigenfunctional of $A_{D}$ with the eigenvalue $\lambda$.

The eigenvalues of ${\cal L} = {\cal L}^*$ are given by real numbers
\begin{align}
\lambda_n = - \left( \frac{n \pi}{ L } \right)^2
\label{diffusion-neumann-eigenvalues}
\end{align}
and the weight functions $w_{\lambda}(x)$ can be taken as
\begin{align}
w_{n}(x) = \cos ( \sqrt{ | \lambda_n | } x ) = \cos \left( \frac{ n \pi }{ L } x \right),
\end{align}
where $n=1, 2, \cdots$. Note that the zero eigenvalue does not arise.

Therefore, the Fourier-cosine transform of $u$,
\begin{align}
\phi^D_{n}[u] = \int_0^L u(x) \cos\left( \frac{n \pi}{L} x \right) dx
\label{diffusionkoopman}
\end{align}
for $n=1, 2, \cdots$, gives the principal Koopman eigenfunctional of the linear diffusion equation~(\ref{lineardiffusion}) with the boundary conditions~(\ref{neumann}), where the Koopman eigenvalue $\lambda_n$ is given by Eq.~(\ref{diffusion-neumann-eigenvalues}).
Indeed, from Eq.~(\ref{adjoint0}), we have
\begin{align}
A_{D} \phi^D_{n}[u]
&= 
\int u(x) \left\{ \frac{\pa^2}{\pa x^2} \cos\left( \frac{n \pi}{L} x \right) \right\} dx 
\cr
&= - \left( \frac{n \pi}{L} \right)^2 \int u(x) \cos\left( \frac{n \pi}{L} x \right) dx
\cr
&=
\lambda_n \phi^D_{n}[u].
\end{align}

The inertial manifold ${\cal I}$ of the system is given by a set of functions satisfying $\phi^D_{n}[u] = 0$ for $n \geq 2$. From Eq.~(\ref{diffusionkoopman}), this is simply a function space spanned by $\cos ( \pi x / L)$. Similarly, the isostable characterizing asymptotic convergence to the stationary state is given by the level set of
$\phi^D_{1}[u] = \int_0^L u(x) \cos (\pi x / L) dx$,
i.e., by the 1st Fourier-cosine coefficient of the field variable.

In Fig.~\ref{fig1}, numerical results for the linear diffusion equation~(\ref{lineardiffusion}) with $L=20$ are presented.
Figure~\ref{fig1}(a) shows several snapshots of the field variable $u(x, t)$ during the relaxation to the stationary state $u_0(x) = 2$ (which is subtracted in the theory as mentioned before).
Figure~\ref{fig1}(b) plots the values of
\begin{align}
\phi_{n}^{D} [S_{D}^{t} u] = \int_0^L u(x, t) \cos \left( \frac{n \pi}{L} x \right) dx
\end{align}
as a function of time $t$ for $n=1, ..., 5$.
The exponential decay of the data points indicate that $\phi_{n}^D$ actually satisfies the eigenvalue equation
\begin{align}
\phi_{n}^D[ S_{D}^{t} u ] = U^t_{D} \phi_{n}^D[ u ] = e^{\lambda_n t}  \phi_{n}^D[ u ].
\end{align}
The Koopman eigenvalues evaluated numerically from the slopes of the data in Fig.~\ref{fig1}(b) are plotted in Fig.~\ref{fig4}, which agree well with the theoretical values given by Eq.~(\ref{diffusion-neumann-eigenvalues}), as expected.

\subsubsection{Dirichlet boundary conditions}

We next consider a linear diffusion equation on $[0, L]$,
\begin{align}
\frac{\pa}{\pa t} v(x, t) = \frac{\pa^2}{\pa x^2} v(x, t),
\label{lineardiffusion-d}
\end{align}
where $v(x, t)$ is the field variable, with inhomogeneous Dirichlet boundary conditions,
\begin{align}
v(0, t) = a, \quad v(L, t) = b,
\label{dirichlet0}
\end{align}
where $a$ and $b$ are real constants. In this case, Eq.~(\ref{lineardiffusion-d}) has a stationary solution
\begin{align}
v_0(x) = a + \frac{b-a}{L} x.
\label{diffusion-statioary}
\end{align}
We introduce a new field variable $u(x, t) = v(x, t) - v_0(x)$, which obeys
\begin{align}
\frac{\pa}{\pa t} u(x, t) = \frac{\pa^2}{\pa x^2} u(x, t)
\label{diffusiondirichlet}
\end{align}
with homogeneous Dirichlet boundary conditions,
\begin{align}
u(0, t) = u(L, t) = 0.
\label{dirichlet}
\end{align}
In this case, the bilinear concomitant $J$ vanishes by assuming homogeneous Dirichlet boundary conditions also for the adjoint problem. 
Thus, the problem is self-adjoint and the weight function $w(x) = w_{\lambda}(x)$ satisfies
\begin{align}
{\cal L}^{*} w_{\lambda}(x) = \frac{\pa^2}{\pa x^2} w_{\lambda}(x) = \overline{\lambda} w_{\lambda}(x)
\end{align}
and
\begin{align}
w(0) = w(L) = 0. 
\end{align}

The eigenvalues of ${\cal L} = {\cal L}^*$ in this case are the same as in the Neumann case and are given by 
\begin{align}
\lambda_n = - \left( \frac{n \pi}{ L } \right)^2,
\label{diffusion-dirichlet-eigenvalues}
\end{align}
and the weight functions $w_{\lambda}(x)$ are now given by
\begin{align}
w_{n}(x) = \sin ( \sqrt{ | \lambda_n | } x ) = \sin \left( \frac{ n \pi }{ L } x \right)
\end{align}
for $n=1, 2, \cdots$.

Therefore, the Fourier-sine transform of $u$,
\begin{align}
\phi^D_{n}[u] = \int_0^L u(x) \sin\left( \frac{n \pi}{L} x \right) dx,
\end{align}
are the principal Koopman eigenfunctionals of Eqs.~(\ref{diffusiondirichlet}) and~(\ref{dirichlet}).
In terms of the original variable $v$, this is given in the form
\begin{align}
\phi^D_{n}[v] = \int \{ v(x) - v_0(x) \} \sin\left( \frac{n \pi}{L} x \right) dx.
\label{Koopman-diffusion}
\end{align}

The inertial manifold ${\cal I}$ is now given by a function space spanned by $\sin ( \pi x / L)$, and the isostable is given by the level set of $\phi^D_{1}[u] = \int_0^L u(x) \sin (\pi x / L) dx$, i.e., by the 1st Fourier-sine coefficient of the field variable.

\subsubsection{Koopman modes}

As explained in Sec. II G, the 1st Koopman mode $d_1(x)$ of the linear diffusion equation are given by the eigenfunction $q_1(x)$ associated with the eigenvalue $\lambda_1$.
For Neumann boundary conditions, the field variable $u(x, t)$ converges to $0$ as $u(x, t) \simeq a_1 \exp( - \lambda_1 t ) \cos( \pi x / L )$ with $\lambda_1 = - ( \pi / L)^2$, so $d_1(x) = a_1 \cos( \pi x / L )$, where the constant $a_1$ is determined by the initial condition.
Indeed, we can observe in Fig.~\ref{fig1}(a) that the field variable $u$ at $t=30$ is approximately
a cosine function $\cos ( \pi x / L )$.
Similarly, for Dirichlet boundary conditions, $u(x, t)$ converges to $0$ as $u(x, t) \simeq a_1 \exp( -\lambda_1 t ) \sin( \pi x / L )$ and we have $d_1(x) = a_1 \sin( \pi x / L )$.
These functions represent, naturally, the slowest-decaying spatial modes of the linear diffusion equation with these boundary conditions.

\subsection{Burgers equation}

As the first example of the nonlinear PDE, we consider the viscous Burgers equation, which is exactly solvable.
The Koopman-operator and related analysis for the Burgers equation has been discussed by Kutz, Proctor, and Brunton~\cite{kutz2018koopman}, Page and Kerswell~\cite{page2018koopman}, Peitz and Klus~\cite{peitz2019koopman},
Wilson and Djouadi~\cite{wilson2020adaptive},
and by Balabane, Mendez, and Najem~\cite{balabane}.
In Refs.~\cite{kutz2018koopman,page2018koopman}, the spectrum and Koopman modes have been analyzed by using DMD and kernel methods.
In Ref.~\cite{peitz2019koopman}, model reduction and control of the Burgers equation is discussed and in Ref.~\cite{wilson2020adaptive}, isostable reduction for control of the Burgers equation is performed.
In Ref.~\cite{balabane}, rigorous mathematical analysis on the Koopman-operator of the Burgers equation is performed.
Here, we present the results on the Koopman eigenvalues and eigenfunctionals of the Burgers equation in the context of conjugacy with the linear diffusion equation.

The viscous Burgers equation can be expressed as
\begin{align}
\frac{\pa}{\pa t} z(x, t) = - z(x, t) \frac{\pa}{\pa x} z(x, t) + \frac{\pa^2}{\pa x^2} z(x, t)
\label{burgers}
\end{align}
after rescaling, where $z(x, t)$ is the field variable on $[0, L]$. We assume homogeneous Dirichlet boundary conditions,
\begin{align}
z(0, t) = z(L, t) = 0.
\label{dirichlet-burgers}
\end{align}
In this case, the field variable $z(x, t)$ converges to a stable stationary state $z_0(x) = 0$ as $t \to \infty$ after transient.
We denote the flow, Koopman operator, and generator of the Koopman operator of this system as $S_{B}^t$,  $U_{B}^t$, and  $A_{B}$, respectively.

\begin{figure}[t]
\includegraphics[width=0.85\hsize]{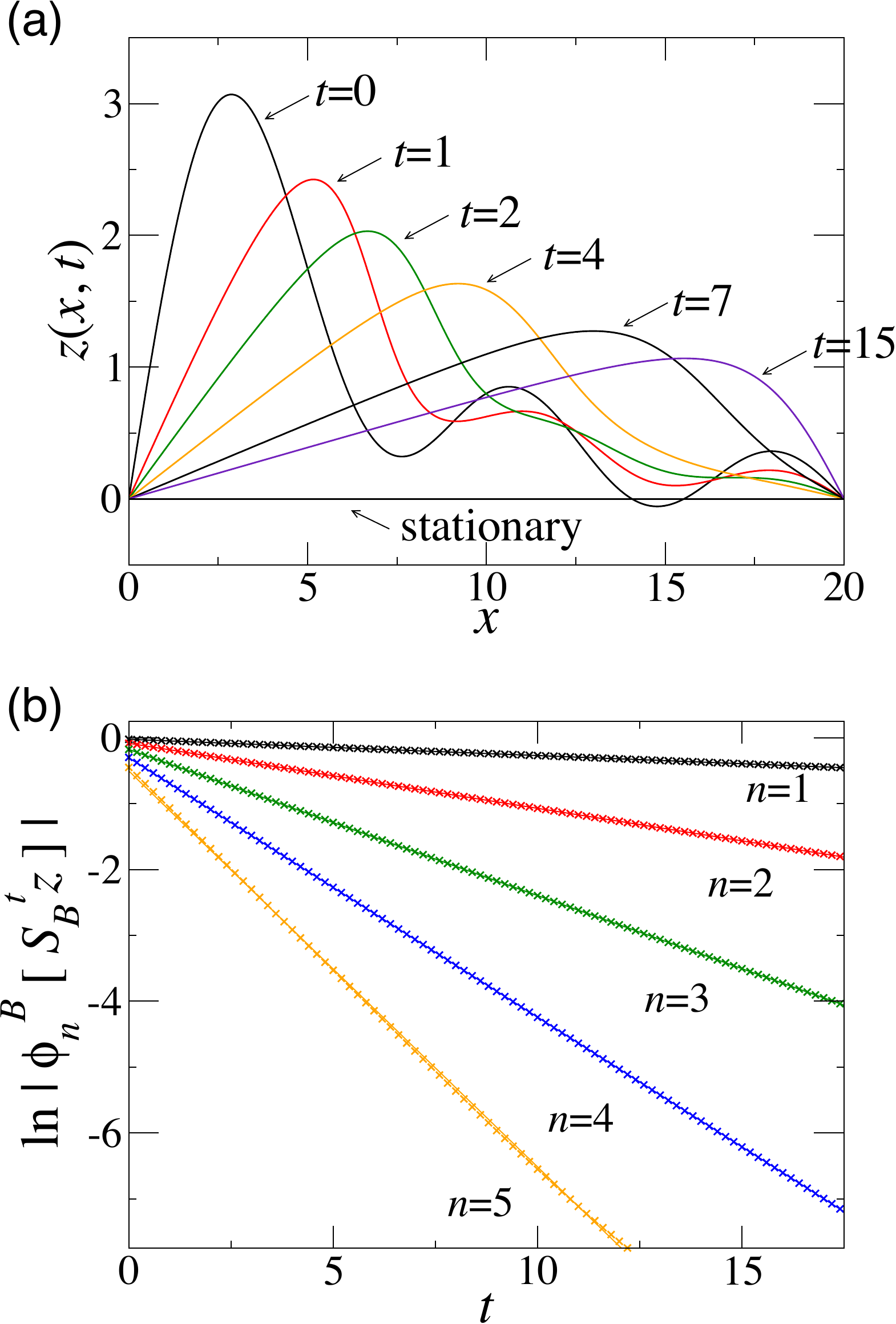}
\caption{Burgers equation with Dirichlet boundary conditions. (a) Relaxation of the field variable $z(x, t)$ to a uniform stationary state. (b) Exponential decay of the Koopman eigenfunctionals $\phi^B_{n}[z(x,t)]$ for $n=1, \cdots, 5$. The initial state is $z(x, t) =  \sin( \pi x/L ) + 0.9  \sin( \pi x/L ) + 0.8 \sin( 3 \pi x / L ) + 0.7 \sin( 4 \pi x / L ) + 0.6 \sin( 5 \pi x/L )$. 
}
\label{fig2}
\end{figure}

It is well known that the Burgers equation and linear diffusion equation are related via the Hopf-Cole transformation~\cite{hopf1950partial,cole1951quasi},
\begin{align}
z(x, t) = - 2 \frac{\pa}{\pa x} \ln v(x, t) = - 2 \frac{1}{v(x, t)} \frac{\pa}{\pa x} v(x, t),
\label{hopfcole}
\end{align}
\begin{align}
v(x, t) = c(t) \exp \left( - \frac{1}{2} \int_0^x z(y, t) dy \right),
\label{hopfcolev}
\end{align}
where the coefficient $c(t)$ should be chosen appropriately so that $v(x, t)$ obeys the diffusion equation.
Namely, if $z(x, t)$ satisfies Eqs.~(\ref{burgers}) and~(\ref{dirichlet-burgers}), then $v(x, t)$ satisfies the linear diffusion equation
\begin{align}
\frac{\pa}{\pa t} v(x, t) = \frac{\pa^2}{\pa x^2} v(x, t)
\label{lineardiffusion2}
\end{align}
with the Neumann boundary conditions
\begin{align}
\frac{\partial v}{\partial x}(0, t) = \frac{\partial v}{\partial x}(L, t) = 0,
\label{neumann-diffusion}
\end{align}
and if $v(x, t)$ satisfies Eqs.~(\ref{lineardiffusion2}) and~(\ref{neumann-diffusion}), then $z(x, t)$ satisfies the Burgers equation~(\ref{burgers}) with the Dirichlet boundary conditions~(\ref{dirichlet-burgers}).
Thus, the Hopf-Cole transformation gives the mapping $\Phi$ of conjugacy in Eq.~(\ref{conj}).

Substituting Eq.~(\ref{hopfcolev}) into Eq.~(\ref{lineardiffusion2}), it turns out that $c(t)$ should satisfy
\begin{align}
\frac{d}{dt} \ln c(t) = \frac{1}{2} \left( - \left. \frac{\partial z(x, t)}{\partial x}\right|_{x=0} + \frac{1}{2} z(0, t)^2 \right).
\end{align}
In the present case with the Dirichlet boundary conditions for $z(x, t)$, the above equation is satisfied by~\cite{page2018koopman} 
\begin{align}
c(t) = \left( \int_0^L \exp \left( - \frac{1}{2} \int_0^x z(y, t) dy \right) dx \right)^{-1}.
\end{align}
The solution $v(x, t)$ of Eq.~(\ref{lineardiffusion2}) with Eq.~(\ref{neumann-diffusion}) approaches a constant $v_{\infty}$ as $t \to \infty$, 
\begin{align}
v_{\infty} = \lim_{t \to \infty} v(x, t) = \lim_{t \to \infty} c(t) = \frac{1}{L},
\end{align}
and by introducing a new field variable
\begin{align}
u(x, t) = v(x, t) - v_{\infty},
\end{align}
$u(x, t)$ satisfies Eqs.~(\ref{lineardiffusion}) and Eq.~(\ref{neumann}) and $u(x, t) \to 0$ as $t \to \infty$, for which Eq.~(\ref{diffusionkoopman}) gives the Koopman eigenfunctional.
Therefore, we find that the functional
\begin{align}
\phi_{n}^B[z] =
\int_0^L \left\{ c[z] \exp \left( - \frac{1}{2} \int_0^x z(y) dy \right) - v_{\infty} \right\}
\cr
\times
\cos\left( \frac{n \pi}{L} x \right) dx
\label{burgerskoopmaneigenfn}
\end{align}
with
\begin{align}
c[z] = \left( \int_0^L \exp \left( - \frac{1}{2} \int_0^x z(y) dy \right) dx \right)^{-1}
\label{burgerskoopmancoef}
\end{align}
gives the principal Koopman eigenfunctional of the Burgers equation~(\ref{burgers}) with the Dirichlet boundary conditions~(\ref{dirichlet-burgers}) satisfying
\begin{align}
\phi_{n}^B[ S_{B}^{t} z ] = U^t_B \phi_{n}^B[ z ] = e^{\lambda_n t}  \phi_{n}^B[ z ],
\label{burgers-exponentialdecay}
\end{align}
where we represented the coefficient $c(t)$ in Eq.~(\ref{hopfcolev}) as a functional $c[z]$ of the field variable $z$.
Because of the conjugacy, the associated eigenvalues $\lambda_n$ are the same as those for the linear diffusion equation with Neumann boundary conditions,
\begin{align}
\lambda_n = - \left( \frac{n \pi}{ L } \right)^2
\quad
 (n=1, 2, \cdots).
 \label{burgers-theory-exponents}
\end{align}
The inertial manifold ${\cal I}$ for the Burgers equation is given by a set of functions $z$ satisfying $\phi^B_{n}[z] = 0$ for $n \geq 2$, and the isostables are given by the level sets of $\phi^B_{\lambda_1}[z]$.

Figure~\ref{fig2}(a) shows several snapshots of the field variable $z(x, t) = S_{B}^{t} z(x, 0)$ of the Burgers equation with $L=20$ during relaxation to the stationary solution obtained by direct numerical simulations.
It can be seen that the peak of $z(x, t)$ gradually moves to the right due to advection and then decays to zero due to viscosity.
Figure~\ref{fig2}(b) plots the values of $\phi_{n}^B[ S_{B}^{t} z ]$ ($n=1, 2, ..., 5$) given by Eqs.~(\ref{burgerskoopmaneigenfn}) and (\ref{burgerskoopmancoef}) with respect to $t$ during the relaxation, which exhibit clear exponential decay despite the nonlinear evolution of $z(x, t)$, indicating that Eq.~(\ref{burgers-exponentialdecay}) is actually satisfied.
The Koopman eigenvalues evaluated numerically from the slopes of the data points in Fig.~\ref{fig2}(b) are plotted in Fig.~\ref{fig4}, which agree well with the theoretical values, Eq.~(\ref{burgers-theory-exponents}). 

In Ref.~\cite{page2018koopman}, Page and Kerswell reported that the Koopman eigenvalues are highly degenerate, namely, the multiplicity of the Koopman eigenvalue increases quickly with $n$. This is because the products of principal Koopman eigenfunctionals are also Koopman eigenfunctionals and because the eigenvalues, Eq.~(\ref{burgers-theory-exponents}), have quite a simple quadratic dependence on $n$.
In such cases, due to the same reason, the expansion of the field variable in Eq.~(\ref{expansion}) will be composed of a number of different Koopman eigenfunctionals with the same growth rate.

In Ref.~\cite{ROBERTS2000187}, a modified Burgers equation, which exhibits a pitchfork bifurcation, is considered and an explicit expression of the center manifold near the bifurcation point is derived.
Though not discussed in the present study, the Koopman-operator approach can also be developed for PDE's possessing a center manifold and, when spectral expansions exist, we can use it for uniquely constructing the center manifold in a similar way to the inertial manifold discussed in this study (see Ref.~\cite{mezic2017koopman}, Sec. 3). It would then be insightful to compare the results obtained by the standard center manifold theory and that obtained by the newly developed Koopman operator approach.

\begin{figure}[t]
\includegraphics[width=0.85\hsize]{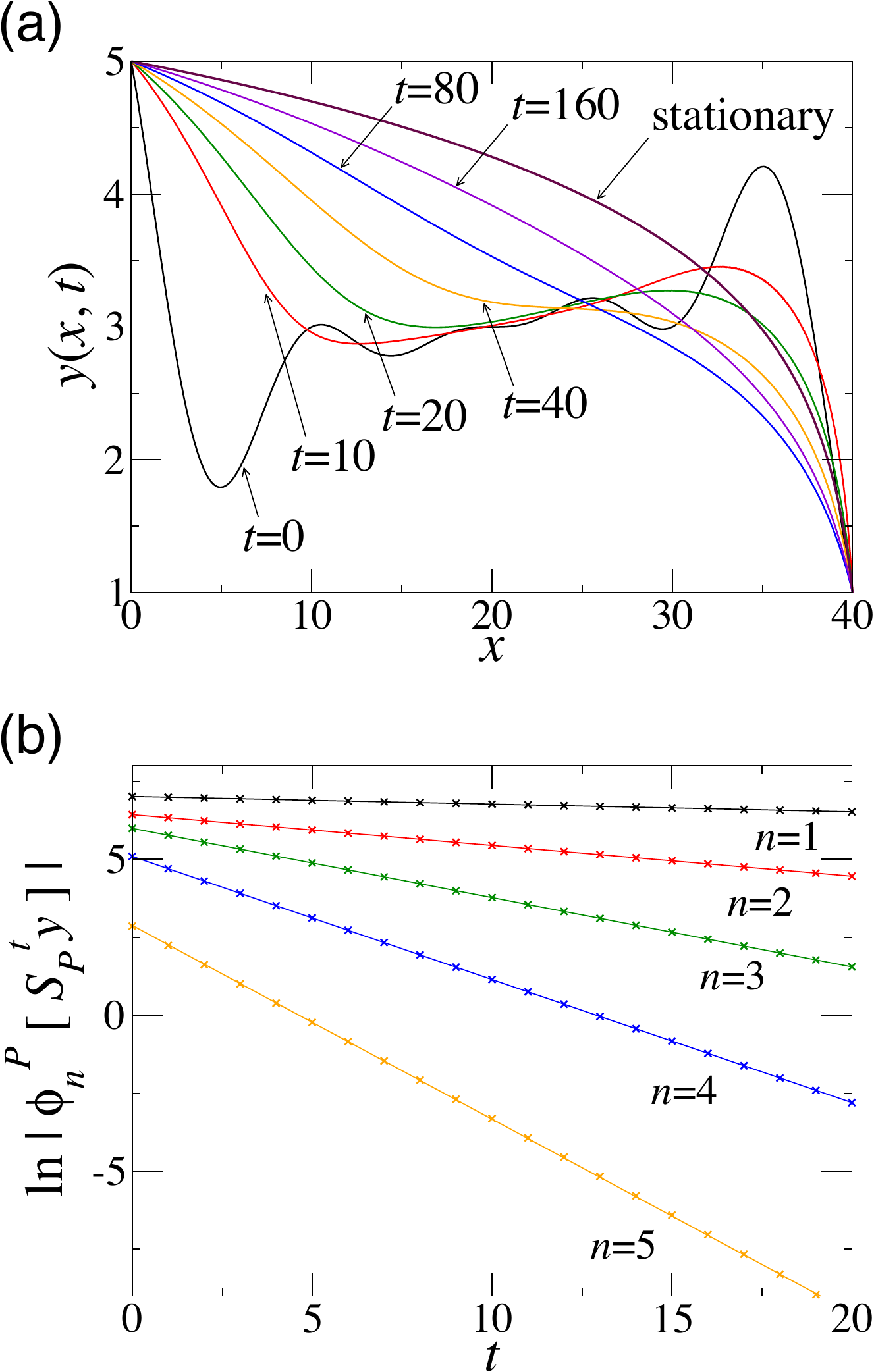}
\caption{Nonlinear phase-diffusion equation with inhomogeneous Dirichlet boundary conditions. (a) Relaxation of the system to a uniform state. (b) Exponential decay of the Koopman eigenfunctionals $\phi^P_{n}[y(x,t)]$ for $n=1, \cdots, 5$. The boundary values are $q=5$ and $p=1$, and the initial state is given by $y(x, 0) = 3 + 0.5 \{ \cos( \pi x/L ) - \sin( 2\pi x/L ) + \cos( 3\pi x/L ) - \sin( 4\pi x/L ) + \cos( 5\pi x/L ) - \sin( 6\pi x/L) + \cos( 7\pi x/L ) - \sin( 8\pi x/L) \}$.}
\label{fig3}
\end{figure}

\subsection{Nonlinear phase-diffusion equation}

As the second example of the nonlinear PDE, we consider a ``nonlinear phase-diffusion equation'' on $[0, L]$, which describes the dynamics of the phase field of self-oscillatory media, such as the propagation and collision of phase waves~\cite{kuramoto2012chemical}. It is given by
\begin{align}
\frac{\pa}{\pa t} y(x, t) = \frac{\pa^2}{\pa x^2} y(x, t) + \left( \frac{\pa }{\pa x} y(x, t) \right)^2
\label{phasediffusion}
\end{align}
after rescaling, where $y(x, t)$ is the field variable representing local phase of an oscillatory medium and natural frequency of the medium is subtracted without loss of generality. 
The above equation also gives the deterministic part of the Kardar-Parisi-Zhang equation describing stochastic growth of rough interfaces~\cite{kardar1986dynamic}.
We assume inhomogeneous Dirichlet boundary conditions,
\begin{align}
y(0, t) = q, \quad y(L, t) = p.
\end{align}
We denote the flow, Koopman operator, and generator of the Koopman operator of this system as $S_{P}^t$,  $U_{P}^t$, and  $A_{P}$, respectively.

As shown in Ref.~\cite{kuramoto2012chemical}, the phase-diffusion equation~(\ref{phasediffusion}) is closely related to the Burgers equation and, by introducing a new field variable
\begin{align}
v(x, t) = \exp y(x, t) ,
\label{phasediffusionconjug}
\end{align}
it transforms into the linear diffusion equation~(\ref{lineardiffusion-d}) with the inhomogeneous Dirichlet boundary conditions, Eq.~(\ref{dirichlet0}) with $a = e^{\beta q}$ and $b = e^{\beta p}$.
The stationary solution Eq.~(\ref{diffusion-statioary}) for the linear diffusion equation corresponds to a stationary solution
\begin{align}
y_0(x)
&= 
\frac{1}{\beta} \ln \left( e^{\beta q} + \frac{e^{\beta p} - e^{\beta q}}{L} x \right) 
\end{align}
of the nonlinear phase-diffusion equation~(\ref{phasediffusion}).

Thus, from Eq.~(\ref{Koopman-diffusion}) for the principal Koopman eigenfunctional of the linear diffusion equation with Dirichlet boundary conditions and 
Eq.~(\ref{phasediffusionconjug}) for the conjugacy, we obtain
\begin{align}
\phi_{n}^P[y] =
\int \{ \exp y(x) - \exp y_0(x) \}
\sin\left( \frac{n \pi}{L} x \right) dx
\label{phasekoopman}
\end{align}
as the Koopman eigenfunctional of $U^t_P$ and $A_P$ of the nonlinear phase-diffusion equation associated with eigenvalue
\begin{align}
\lambda_n = - \left( \frac{n \pi}{ L } \right)^2 \quad (n=1, 2, \cdots),
\end{align}
satisfying
\begin{align}
\phi_{n}^P[ S_{P}^{t} y] = U^t_P \phi_{n}^P[ y ] = e^{\lambda_n t}  \phi_{n}^P[ y ].
\label{exponentialdecayphase}
\end{align}
As before, the inertial manifold ${\cal I}$ for the nonlinear phase-diffusion equation is given by a set of functions satisfying $\phi^P_{n}[y] = 0$ for $n \geq 2$, and the isostables are given by the level sets of $\phi^P_{\lambda_1}[y]$.

Figure~\ref{fig3}(a) shows several snapshots of the field variable $y(x, t) = S_{P}^{t} y(x, 0)$ during relaxation to the stationary solution $y_0(x)$ of the nonlinear phase-diffusion equation with $L=40$ obtained by direct numerical simulations. The phase field converges to a stationary pattern determined by the boundary conditions.
Figure~\ref{fig3}(b) plots {the values of $\phi_{n}^P[ S_{P}^{t} y ]$ ($n=1, 2, ..., 5$) with respect to $t$}, which again exhibit clear exponential decay despite nonlinear evolution of $y(x, t)$ {and indicate that Eq.~(\ref{exponentialdecayphase}) is actually satisfied}.
The Koopman eigenvalues evaluated numerically from the slopes of the data points in Fig.~\ref{fig3}(b) are plotted in Fig.~\ref{fig4}, which agree well with the theoretical values. Because of the conjugacy, the eigenvalues of the nonlinear phase-diffusion equation coincide with those of the diffusion equation as well as the Burgers equation.

\begin{figure}[thbp]
\includegraphics[width=0.85\hsize]{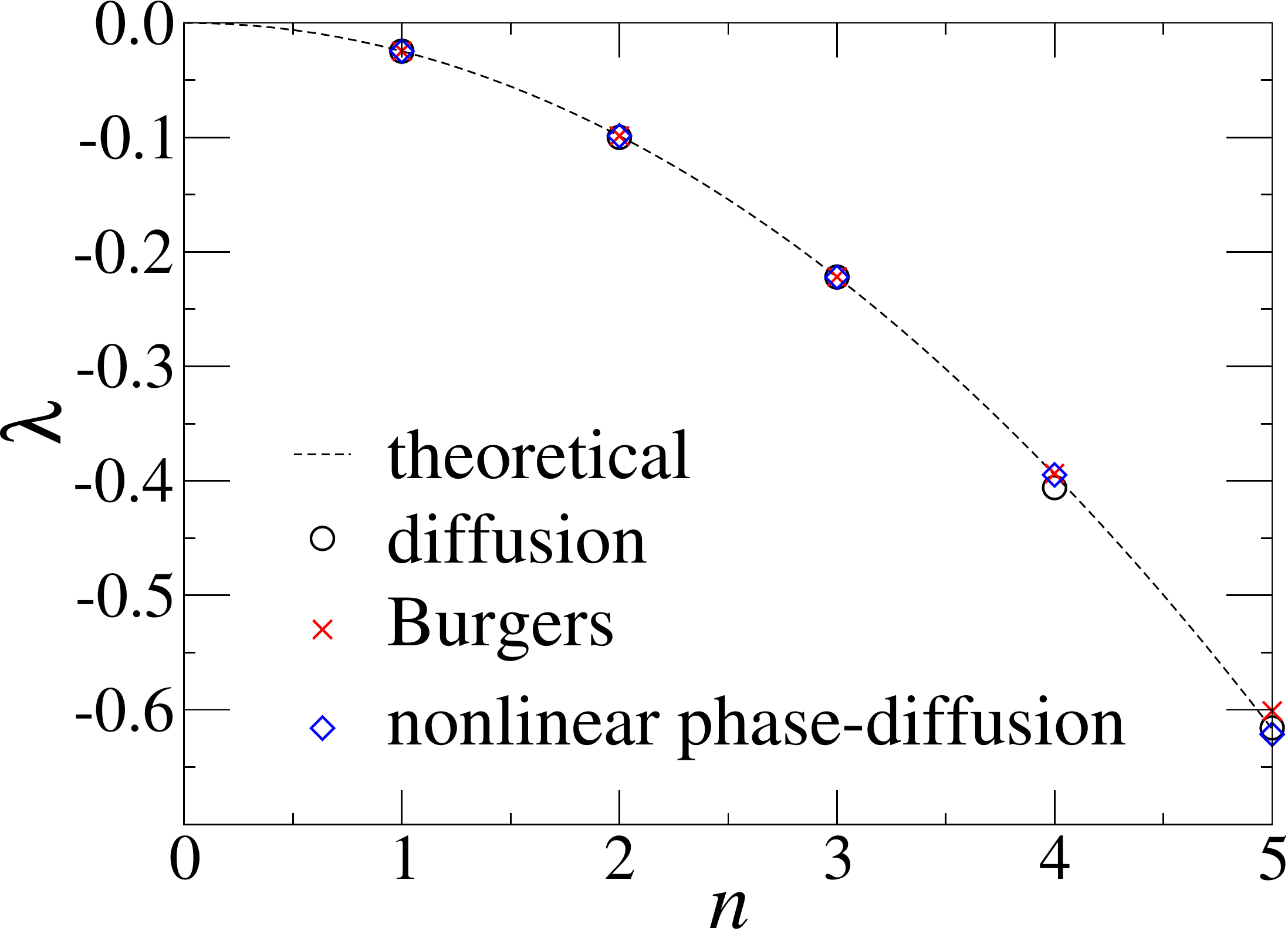}
\caption{Koopman eigenvalues for the linear diffusion, Burgers, and nonlinear phase-diffusion equations. Numerical results are compared with theoretical values.}
\label{fig4}
\end{figure}

\section{Summary}

We have shown that the spectral Koopman-operator formalism can formally be generalized to a PDE describing relaxation of the system state to a stationary state by introducing the concept of the Koopman eigenfunctionals.
Using exactly solvable examples of linear and nonlinear PDE's (treated via conjugacy), we have illustrated that the system can be decomposed into independent Koopman modes, and the dynamics of the system can be described by a set of linear equations for them. 
We have also provided the general definition of isostables for a class of PDE's and the inertial manifolds by using level sets of Koopman eigenfunctionals.
These results have been illustrated for three PDE's that are mutually conjugate, namely, the diffusion equation, Burgers equation, and nonlinear phase-diffusion equation.

The Koopman operator analysis for PDE's discussed in this paper provides a general framework for linearizing nonlinear dynamical systems described by PDE's in the basin of attraction of the stationary state, which will be useful for their analysis and control. The same framework can be further generalized to include limit-cycling PDE's, which is closely related to the phase or phase-amplitude reduction methods for PDE's~\cite{nakao2014phase,wilson2016isostable,nakao2018sice,nakao2020springer}.

We have used the conjugacy relation between Burgers and nonlinear phase-diffusion equations with the linear diffusion equation to derive their Koopman eigenvalues and eigenfunctionals.
It is generally expected that the original nonlinear system is conjugate to its linearized system around an exponentially stable, hyperbolic stationary state of the PDE.
Such a conjugacy relation has been proven, e.g., in the context of the Hartman-Grobman theorem for reaction-diffusion equations in a neighborhood of a stationary solution \cite{lu1991hartman}.

Finally, though we considered only solvable examples for which the Koopman eigenfunctionals can be explicitly obtained, it is generally difficult to obtain the Koopman eigenfunctionals explicitly. However, we should still be able to derive rigorous theoretical results and develop approximate or numerical methods for PDE's on the basis of the general formulation of the Koopman operator analysis. For example, we can use statistical or machine-learning methods to approximate the dominant Koopman eigenfunctional for given PDE's, and then use them to analyze their dynamics and develop methods for control and optimization.

\acknowledgments{
H. N. thanks JSPS KAKENHI JP17H03279, JP18H03287, and JST CREST JPMJCR1913 for financial support.
I. M. acknowledges support from ARO-MURI W911NF-17-1-0306.
}

\section{Appendix}

  \renewcommand{\theequation}{A\arabic{equation}}
  \setcounter{equation}{0}  

In this appendix, the definition of the functional derivatives and derivation of the expansion of analytic observables are informally given using physics notations. Mathematical theories on Banach calculus can be found in Refs.~\cite{dieudonne2011foundations,zeidler2012applied,Teschl}. 

\subsection{Functional derivatives}

The functional derivative $\delta g[u] / \delta u(x)$ of a smooth functional $g[u]$ by $u(x)$ is defined by a G\^ateaux derivative as
\begin{align}
\left. \frac{dg[u+ {\epsilon'} \eta]}{d{\epsilon'}}\right|_{{\epsilon'}=0}
= \int \frac{\delta g[u]}{\delta u(x)} \eta(x) dx
\label{fndrv}
\end{align}
where $\eta(x)$ is a test function. The left-hand side of Eq.~(\ref{fndrv}) is explicitly given by 
\begin{align}
\left. \frac{dg[u + {\epsilon'} \eta]}{d{\epsilon'}}\right|_{{\epsilon'}=0}
&=
\lim_{{\epsilon'} \to 0} \frac{g[u + {\epsilon'} \eta] - g[u]}{{\epsilon'}}.
\end{align}
The generator $A$ of the Koopman operator $U^{\tau}$ can be derived using this expression as
\begin{align}
A g[u]
&= \lim_{\tau \to 0} \frac{ U^{\tau} g[u] - g[u] } {\tau}
= \lim_{\tau \to 0} \frac{ g[S^{\tau} u] - g[u] } {\tau}
\cr
&= \lim_{\tau \to 0} \frac{ g[u + \tau {\cal F}\{u(x)\} + O(\tau^2)] - g[u] } {\tau}
\cr
&= \left. \frac{dg[u+\tau {\cal F}\{ u(x) \}]}{d\tau}\right|_{\tau=0}
\cr
&= \int_0^L \frac{\delta g[u]}{\delta u(x)} {\cal F}\{ u(x) \} dx
\end{align}
where we used
\begin{align}
S^{\tau} u(x) = u(x) + \tau {\cal F}\{u(x)\} + O(\tau^2) 
\end{align}
for sufficiently small $\tau$.

Similarly, higher-order functional derivatives of a smooth functional $g[u]$ by $u(x)$ is
defined by Taylor expansion of $g[u + \epsilon \eta]$ in $\epsilon$ around $\epsilon = 0$,
\begin{align}
g[ u + \epsilon \eta ]
=& g[ u ] + \left. \frac{dg[u+{\epsilon'} \eta]}{d{\epsilon'}} \right|_{{\epsilon'}=0} \epsilon
\cr
&+ \frac{1}{2} \left. \frac{d^2 g[u+{\epsilon'} \eta]}{d{\epsilon'}^2} \right|_{{\epsilon'}=0} \epsilon^2 + \cdots
\cr
=&
\sum_{n=0}^{\infty} \frac{1}{n!} \left. \frac{d^n g[u+{\epsilon'} \eta]}{d{\epsilon'}^n} \right|_{{\epsilon'}=0} \epsilon^n,
\end{align}
where the $n$-th expansion coefficient gives the $n$-th order functional derivative of $g[u]$ evaluated at $u(x)$,
\begin{align}
&
\left. \frac{d^n g[u+{\epsilon'} \eta]}{d{\epsilon'}^n} \right|_{{\epsilon'}=0} = \int \eta(x_1) \eta(x_2) \cdots \eta(x_n) 
\cr
&
\quad
\times
\frac{ \delta^n g[u] }{ \delta u(x_1) \delta u(x_2) \cdots \delta u(x_n) } dx_1 dx_2 \cdots dx_n.
\end{align}
Note that
\begin{align}
\left. \frac{d^0 g[u+{\epsilon'} \eta]}{d{\epsilon'}^0} \right|_{{\epsilon'}=0} = g[u]
\end{align}
when $n=0$.

Setting $\epsilon = 1$ in the above expression, the functional Taylor expansion of $g[u + \eta]$ around ${u}$ is given by
\begin{align}
g[u+\eta] 
=&
g[u] + \int \frac{\delta g[u]}{\delta u(x_1)} \eta(x_1) dx_1
\cr
&+ \frac{1}{2!} \int \frac{\delta^2 g[u]}{\delta u(x_1) \delta u(x_2) } \eta(x_1) \eta(x_2) dx_1 dx_2 + \cdots
\cr
=& g[u] + \sum_{n=1}^{\infty} \frac{1}{n!} \int \frac{ \delta^n g[u] }{ \delta u(x_1) \delta u(x_2) \cdots \delta u(x_n) }
\cr
&\times
\eta(x_1) \eta(x_2) \cdots \eta(x_n) dx_1 dx_2 \cdots dx_n.
\end{align}
For a finite Taylor series of functionals, see e.g.~\cite{zeidler2012applied}.

\subsection{Derivation of the expansion of $g[u]$}

We consider the system Eq.~(\ref{systemexpanded}) with an exponentially stable stationary state $u_0(x) = 0$.
We assume that the linearized operator ${\cal L}$ of ${\cal F}$ around $u_0(x) = 0$ has a discrete spectrum of eigenvalues \{$\lambda_j$\}, where $\lim_{j \to \infty} \mbox{Re\ }\lambda_j = -\infty$ and this is the only accumulation point. 
We denote by $\phi_{\lambda_j}[ u ]$ a Koopman eigenfunctional associated with $\lambda_{j}$.

We assume that the observation functional $g[u]$ is analytic and can be expanded in functional Taylor series around $u=0$ as
\begin{align}
g[u]
=& \sum_{n=0}^{\infty} \frac{1}{n!} \int \frac{ \delta^n g[u] }{ \delta u(x_1) \delta u(x_2) \cdots \delta u(x_n) }
\cr
\times
& u(x_1) u(x_2) \cdots u(x_n) dx_1 dx_2 \cdots dx_n
\label{functionalTaylor}
\end{align}
where the functional derivatives are evaluated at $u(x) = u_0(x) = 0$.

By using the Koopman eigenfunctionals, we introduce new variables from the field variable $u(x)$ as
\begin{align}
y_i = \phi_{\lambda_i}[ u ] \quad (i=1, 2, ... ),
\end{align}
each of which satisfies $d{y}_i/dt = \lambda_i y_i$ from Eq.~(\ref{generatoreigen}).
When $u(x) = u_0(x) = 0$, we have $y_i = \phi_{\lambda_i}[0] = 0$ for all $i$, because $u_0(x) = 0$ is exponentially stable and therefore $\mbox{Re} \lambda_i < 0$.
We assume that the field variable $u(x)$ can be expressed inversely using these new variables as
\begin{align}
u(x) = V_x (y_1, y_2, ...), \quad (0 \leq x \leq L),
\end{align}
where $V_x$ is a $x$-dependent function satisfying $V_x(0, 0, ...) = u_0(x) = 0$.

We assume that the field variable $u(x) = V_x(y_1, y_2, ...)$ can be expanded around $(y_1, y_2, ...) = (0, 0, ...)$ as
\begin{align}
u(x)
=& \sum_j \frac{\partial V_x}{\partial y_j} y_j + \frac{1}{2!} \sum_{j,k} \frac{\partial^2 V_x}{\partial y_j \partial y_k} y_j y_k
\cr
&+ \frac{1}{3!} \sum_{j,k,l} \frac{\partial^3 V_x}{\partial y_j \partial y_k \partial y_l} y_j y_k y_l + \cdots,
\end{align}
where the partial derivatives of $V_x(y_1, ..., y_N)$ are evaluated at $(y_1, y_2, ...) = (0, 0, ...)$ and each of $j, k$, and $l$ runs from $1$ to $\infty$.
Plugging this expression into the functional Taylor expansion of $g[u]$ in Eq.~(\ref{functionalTaylor}), we obtain
\begin{align}
&
g[u]
= g[0] + \int dx_1 \frac{\delta g[u]}{\delta u(x_1)} \Big\{
\sum_j \frac{ \partial V_{x_1}}{ \partial y_j } y_j
\cr
&\quad
+ \frac{1}{2!} \sum_{j,k} \frac{\partial^2 V_{x_1}}{\partial y_j \partial y_k} y_j y_k
+ \frac{1}{3!} \sum_{j,k,l} \frac{\partial^3 V_{x_1}}{\partial y_j \partial y_k \partial y_l} y_j y_k y_l + \cdots
\Big\}
\cr
&+
\frac{1}{2!} \int dx_1 dx_2 \frac{\delta^2 g[u]}{\delta u(x_1) \delta u(x_2)} 
\Big\{ \sum_{j,k}
\frac{\partial V_{x_1}}{\partial y_j} \frac{\partial V_{x_2}}{\partial y_k} y_j y_k
\cr
&\quad
+\frac{1}{2!} \sum_{j,k,l}
\left(
\frac{\partial V_{x_1}}{\partial y_j}
\frac{\partial^2 V_{x_2}}{\partial y_k \partial y_l}
+
\frac{\partial V_{x_2}}{\partial y_j}
\frac{\partial^2 V_{x_1}}{\partial y_k \partial y_l}
\right) y_j y_k y_l
+ \cdots \Big\}
\cr
&+
\frac{1}{3!} \int dx_1 dx_2 dx_3 \frac{\delta^3 g[u]}{\delta u(x_1) \delta u(x_2) \delta u(x_3)}
\cr
&\quad
\times
\Big\{ \sum_{j, k, l} \frac{\partial V_{x_1}}{\partial y_j} \frac{\partial V_{x_2}}{\partial y_k} \frac{\partial V_{x_3}}{\partial y_l} y_j y_k y_l + \cdots \Big\}
+ \cdots
\end{align}

Thus, the observation functional $g[u]$ can be expanded using the Koopman eigenfunctionals as
\begin{align}
&
g[u] = g[0]
+ \sum_j \phi_{\lambda_j}[u] \int \frac{\delta g[u]}{\delta u(x_1)} \frac{ \partial V_{x_1}}{ \partial y_j } dx_1
\cr
&+ \frac{1}{2!} \sum_{j,k} \phi_{\lambda_j}[u] \phi_{\lambda_k}[u] \Big\{ \int \frac{\delta g[u]}{\delta u(x_1)} \frac{\partial^2 V_{x_1}}{\partial y_j \partial y_k} d{x_1}
\cr
&\quad
+ \int \frac{\delta^2 g[u]}{\delta u(x_1) \delta u(x_2)} \frac{\partial V_{x_1}}{\partial y_j} \frac{\partial V_{x_2}}{\partial y_k} dx_1 dx_2 \Big\}
\cr
&+ \frac{1}{3!} \sum_{j,k,l} \phi_{\lambda_j}[u] \phi_{\lambda_k}[u] \phi_{\lambda_l}[u]
\cr
&\quad
\times \Big\{
\int \frac{\delta g[u]}{\delta u(x_1)} \frac{\partial^3 V_{x_1}}{\partial y_j \partial y_k \partial y_l} d{x_1}
+
\frac{3}{2}\int \frac{\delta^2 g[u]}{\delta u(x_1) \delta u(x_2)}
\cr
&\quad\quad
\times \left( \frac{\partial V_{x_1}}{\partial y_j}
\frac{\partial^2 V_{x_2}}{\partial y_k \partial y_l}
+
\frac{\partial V_{x_2}}{\partial y_j}
\frac{\partial^2 V_{x_1}}{\partial y_k \partial y_l}
 \right) d{x_1} d{x_2} 
\cr
&\quad\quad
+ \int \frac{\delta^3 g[u]}{\delta u(x_1) \delta u(x_2) \delta u(x_3)} \frac{\partial V_{x_1}}{\partial y_j} \frac{\partial V_{x_2}}{\partial y_k} \frac{\partial V_{x_3}}{\partial y_l} d{x_1} d{x_2} d{x_3}
\Big\}
+ \cdots 
\cr
&=
g[0] + \sum_{n=1}^{\infty} \frac{1}{n!}
\sum_{j_1, j_2, ..., j_n}
c_{j_1, j_2, ..., j_n}
\phi_{\lambda_{j_1}}[u] \phi_{\lambda_{j_2}}[u] \cdots \phi_{\lambda_{j_n}}[u],
\cr
\end{align}
where $c_{j_1, j_2, ..., j_n}$ are the expansion coefficients.
The first three of them are given by
\begin{align}
c_{j_1} =& \int \frac{\delta g[u]}{\delta u(x_1)} \frac{ \partial V_{x_1}}{ \partial y_{j_1} } d{x_1},
\end{align}
\begin{align}
c_{j_1, j_2} =& \int \frac{\delta g[u]}{\delta u(x_1)} \frac{\partial^2 V_{x_1}}{\partial y_{j_1} \partial y_{j_2}} d{x_1}
\cr
&+ \int \frac{\delta^2 g[u]}{\delta u(x_1) \delta u(x_2)} \frac{\partial V_{x_1}}{\partial y_{j_1}} \frac{\partial V_{x_2}}{\partial y_{j_2}} dx_1 dx_2,
\quad
\end{align}
\begin{align}
c_{j_1, j_2, j_3} =&
\int \frac{\delta g[u]}{\delta u(x_1)} \frac{\partial^3 V_{x_1}}{\partial y_{j_1} \partial y_{j_2} \partial y_{j_3}} d{x_1}
\cr
&+
\frac{3}{2}\int \frac{\delta^2 g[u]}{\delta u(x_1) \delta u(x_2)}
\Big( \frac{\partial V_{x_1}}{\partial y_{j_1}}
\frac{\partial^2 V_{x_2}}{\partial y_{j_2} \partial y_{j_3}}
\cr
&\quad\quad
+
\frac{\partial V_{x_2}}{\partial y_{j_1}}
\frac{\partial^2 V_{x_1}}{\partial y_{j_2} \partial y_{j_3}}
\Big) d{x_1} d{x_2} 
\cr
&+
\int \frac{\delta^3 g[u]}{\delta u(x_1) \delta u(x_2) \delta u(x_3)}
\cr
&\quad\quad
\times
\frac{\partial V_{x_1}}{\partial y_{j_1}} \frac{\partial V_{x_2}}{\partial y_{j_2}} \frac{\partial V_{x_3}}{\partial y_{j_3}} d{x_1} d{x_2} d{x_3},
\quad
\end{align}
and can further be calculated in a similar way.

\newpage

\end{document}